\algnewcommand\algorithmicinput{\textbf{Input:}}
\algnewcommand\INPUT{\item[\algorithmicinput]}
\newcommand{\stb}{\State \;}
\numberwithin{equation}{section}
\theoremstyle{plain}
\newtheorem{theorem}{Theorem}[section]
\newtheorem{proposition}{Proposition}[section]
\newtheorem{assumption}{Assumption}[section]
\newtheorem{remark}{Remark}[section]
\begin{document}

\begin{frontmatter}
\title{Nested adaptation of MCMC algorithms}
\runtitle{Nested adaptation MCMC}

\begin{aug}
\author{\fnms{Dao} \snm{Nguyen}\thanksref{addr1,m1}\ead[label=e1]{dxnguyen@olemiss.edu}},
\author{\fnms{Perry} \snm{de Valpine}\thanksref{addr2}},
\author{\fnms{Yves} \snm{Atchade}\thanksref{addr3}},\\
\author{\fnms{Daniel} \snm{Turek}\thanksref{addr4}},
\author{\fnms{Nicholas} \snm{Michaud}\thanksref{addr2,addr5}},
\author{\fnms{Christopher} \snm{Paciorek}\thanksref{addr5}}

\runauthor{Nguyen et al.}

\address[addr1]{Departments of Mathematics, University of Mississippi, Oxford
    \printead{e1} 
}

\address[addr2]{Department of Environmental Science,
Policy, and Management
University of California, Berkeley
}
\address[addr3]{Department of Statistics, 
University of Michigan, 
Ann Arbor
}
\address[addr4]{Department of Mathematics \& Statistics, 
Williams College, 
Williamstown
}
\address[addr5]{Department of Statistics, 
University of California, 
Berkeley
}


\end{aug}

\begin{abstract}
Markov chain Monte Carlo (MCMC) methods are ubiquitous tools for simulation-based inference in many fields but designing and identifying good MCMC samplers is still an open question. 
This paper introduces a novel MCMC algorithm, namely, Nested Adaptation MCMC. 
For sampling variables or blocks of variables, we use two levels of adaptation where the inner adaptation optimizes the MCMC performance within each sampler, 
while the outer adaptation explores the space of valid kernels to find the optimal samplers. 
We provide a theoretical foundation for our approach. To show the generality and usefulness of the approach, 
we describe a framework using only standard MCMC samplers as candidate samplers and some adaptation schemes for both inner and outer iterations. In several benchmark problems, we show that our proposed approach substantially outperforms other approaches, including an automatic blocking algorithm, 
in terms of MCMC efficiency and computational time.
\end{abstract}

\begin{keyword}
\kwd{adaptive MCMC}
\kwd{MCMC efficiency}
\kwd{integrated autocorrelation time}
\kwd{mixing}
\end{keyword}

\end{frontmatter}

\section{Introduction}

Markov chain Monte Carlo (MCMC) has become a widely used approach
for simulation from an arbitrary distribution of interest, typically
a Bayesian posterior distribution, known as the target distribution.
MCMC really represents a family of sampling methods. Generally speaking,
any new sampler that can be shown to preserve the ergodicity of the
Markov chain such that it converges to the target distribution is
a member of the family and can be combined with other samplers as
part of a valid MCMC kernel. The key to MCMC's success is its
simplicity and applicability. In practice, however, it sometimes needs
a lot of non-trivial tuning to work well \citep{haario2005componentwise}.

To deal with this problem, many adaptive MCMC algorithms have been
proposed \citep{gilks1998adaptive,haario2001adaptive,andrieu2001controlled,sahu2003self}.
These allow parameters of the MCMC kernel to be automatically
tuned based on previous samples. This breaks the Markovian property
of the chain so has required special schemes and proofs that the resulting
chain will converge to the target distribution \citep{andrieu2003ergodicity,atchade2005adaptive,andrieu2006efficiency}.
Under some weaker and easily verifiable conditions, namely ``diminishing
adaptation'' and \textit{``}containment'', \citet{roberts2007coupling}
proved ergodicity of adaptive MCMC and proposed many useful samplers.

It is important to realize, however, that such adaptive MCMC samplers
address only a small aspect of a much larger problem. A typical adaptive MCMC sampler will approximately optimize performance
given the kind of sampler chosen in the first place, but it will not
optimize among the variety of samplers that could have been chosen.
For example, an adaptive random walk Metropolis-Hastings sampler will
adapt the scale of its proposal distribution, but that adaptation
won't reveal whether an altogether different kind of sampler would
be more efficient. In many cases it would, and the exploration of
different sampling strategies often remains a human-driven trial-and-error
affair.

Here we present a method for a higher level of MCMC adaptation. The
adaptation explores a potentially large space of valid MCMC kernels
composed of different samplers. One starts with an arbitrary set of
candidate samplers for each dimension or block of dimensions in the
target distribution. The main idea is to iteratively try different
candidates that compose a valid MCMC kernel, run them for a relatively
short time, generate the next set of candidates based on the results
thus far, and so on. Since relative performance of different samplers
is specific to each model and even to each computing environment,
it is doubtful whether there is a universally optimal kind of sampler.
Hence we view the choice of efficient samplers for a particular problem
as well-suited to empirical determination via computation.

The goal of computationally exploring valid sampler combinations in
search of an efficient model-specific MCMC kernel raises a number
of challenges. First, one must prove that the samples collected as
the algorithm proceeds indeed converge to the target distribution,
even when some of the candidate samplers are internally adaptive,
such as conventional adaptive Metropolis-Hastings samplers. We provide such a proof for a general framework.

Second, one must determine efficient methods for exploring the very
large, discrete space of valid sampler combinations. This is complicated
by a combinatorial explosion, which is exacerbated by the fact that
any multivariate samplers can potentially be used for arbitrary blocks
of model dimensions. Here we take a practical approach to this problem,
setting as our goal only to show basic schemes that can yield substantial
improvements in useful time frames. Future work can aim to develop improvements within
the general framework presented here. We also limit ourselves to relatively simple candidate samplers, 
but the framework can accommodate many more choices.

Third, one must determine how to measure the efficiency of a particular
MCMC kernel for each dimension and for the entire model, in order to
have a metric to seek to optimize. As a first step, it is vital to
realize that there can be a trade-off between good mixing and computational
speed. When considering adaptation within one kind of sampler, say
adaptive Metropolis-Hastings, one can roughly assume that computational cost
does not depend on the proposal scale, and hence mixing measured by
integrated autocorrelation time, or the related effective sample size,
is a sensible measure of efficiency. But when comparing two samplers
with very different computational costs, say adaptive Metropolis-Hastings
and slice samplers, good mixing may or may not be worth its computational
cost. Metropolis-Hastings samplers may mix more slowly than slice samplers
on a per iteration basis, but they do so at higher computational speed
because slice samplers can require many evaluations of model density functions. Thus the greater number of random walk iterations per
unit time could outperform the slice sampler. An additional issue
is that different dimensions of the model may mix at different rates,
and often the slowest-mixing dimensions limit the validity of all
results \citep{turek2017automated}. In view of these considerations, we define MCMC efficiency
as the effective sample size per computation time and use that as
a metric of performance per dimension. Performance of an MCMC kernel
across all dimensions is defined as the minimum efficiency among all
dimensions.

The rest of the paper is organized as follows. Section \ref{theory} begins with a general theoretical framework
for Nested Adaptation MCMC, then extends these methods to a specific Nested Adaptation algorithm
involving block MCMC updating. Section \ref{method} presents an example algorithm
that fits within the framework, and provides some explanations on its
details. Section \ref{example} then outlines some numerical examples comparing
the example algorithm with existing algorithms for a variety of benchmark
models. Finally, section \ref{discussion} concludes and discusses some future research
directions.

\section{A general Nested Adaptation MCMC}\label{theory}

In this section, we present a general Nested Adaptation MCMC algorithm and 
give theoretical results establishing its correctness.

Let $\mathbf{\mathcal{X}}$ be a state space and $\pi$ the probability
distribution on $\mathbf{\mathcal{X}}$ that we wish to sample from.
Let $\mathbf{\mathcal{I}}$ be a countable set (this set indexes the
discrete set of MCMC kernels we wish to choose from). For $\iota\in\mathbf{\mathcal{I}}$,
let $\Theta_{\iota}$ be a parameter space in some subset space $\mathbb{R}^{m}$.
For $\iota\in\mathbf{\mathcal{I}}$ and $\theta\in\Theta_{\iota}$,
let $P_{\iota,\theta}$ denote a Markov kernel on $\mathbf{\mathcal{X}}$
with invariant distribution $\pi$. We set ${\displaystyle \bar{\Theta}=\bigcup_{\iota\in\mathbf{\mathcal{I}}}\{\iota\}\times\Theta_{\iota}}$
the adaptive MCMC parameter space. We want to build a stochastic process
(an adaptive Markov chain) $\{(X_{n},\ \iota_{n},\ \theta_{n}),\ n\ \geq0\}$
on $\mathbf{\mathcal{X}}\times\bar{\Theta}$ such that as $n\rightarrow\infty$,
the distribution of $X_{n}$ converges to $\pi$, and a law of large
numbers holds. We call $\iota$ the external adaptation parameter
and $\theta$ the internal adaptation parameter.

We will follow the general adaptive MCMC recipe of \citet{roberts2009examples}.
Assume that any internal adaptation on $\Theta_{\iota}$ is
done using a function $H_{\iota}$ : $\Theta_{\iota}\times\mathbf{\mathcal{X}}\rightarrow\Theta_{\iota}$,
and an ``internal clock'' sequence $\{\gamma_{n},\ n\ \geq0\}$
such that $\lim_{n\rightarrow\infty}\gamma_{n}=0$. The function $H_{\iota}$ depends on $\gamma_n$ and updates parameters 
for internal adaptation. 
Also, let $\{p_{k},\ k\geq1\}$
be a sequence of numbers $p_{k}\in(0,1)$ such that $\lim_{k\rightarrow\infty}p_{k}=0$.
$p_{k}$ will be the probability of performing external adaptation
at external iteration $k$. During the algorithm
we will also keep track of two variables: $\kappa_{n}$, the number
of external adaptations performed up to step $n$; and $\tau_{n}$,
the number of iterations between $n$ and the last time an external
adaptation is performed. These two variables are used to manage the 
internal clock based on external iterations, which in most situations
can simply be the number of adaptation steps. We build the stochastic
process $\{(X_{n},\ \iota_{n},\ \theta_{n}),\ n\ \geq0\}$ on $\mathbf{\mathcal{X}}\times\bar{\Theta}$
as follows. 
\begin{enumerate}
\item We start with $\kappa_{0}=\tau_{0}=0$. We start also with some $X_{0}\in\mathbf{\mathcal{X}},\ \iota_{0}\in\mathbf{\mathcal{I}}$,
and $\theta_{0}\in\Theta_{\iota_{0}}$. 
\item At the n-th iteration, given $\mathcal{F}_{n}\overset{def}{=}\sigma\{(X_{k},\ \iota_{k},\ \theta_{k}),\ k\leq n\}$,
and given $\kappa_{n},\ \tau_{n}$: 
\begin{enumerate}
\item Draw $X_{n+1}\sim P_{\iota_{n},\theta_{n}}(X_{n},\ \cdot)$. 
\item Independently of $\mathcal{F}_{n}$ and $X_{n+1}$, draw $B_{n+1}\sim$
Bern$(p_{n+1})\in\{0,1\}$. 
\begin{enumerate}
\item If $B_{n+1}=0$, there is no external adaptation: $\iota_{n+1}=\iota_{n}$.
We update $\kappa_{n}$ and $\tau_{n}$: 
\begin{equation}
\kappa_{n+1}=\kappa_{n},\ \tau_{n+1}=\tau_{n}+1.\label{eq:1}
\end{equation}
Then we perform an internal adaptation: set $c_{n+1}=\kappa_{n+1}+\tau_{n+1}$,
and compute 
\begin{equation}
\theta_{n+1}=\theta_{n}+\gamma_{c_{n+1}}H_{\iota_{n}}(\theta_{n},\ X_{n+1}).\label{eq:2}
\end{equation}
Note that the internal adaptation interval could vary between iterations. 
\item If $B_{n+1}=1$, then we do an external adaptation: we choose a new
$\iota_{n+1}$. And we choose a new value $\theta_{n+1}\in\Theta_{\iota_{n+1}}$
based on $\mathcal{F}_{n}$ and $X_{n+1}$. Then we update $\kappa_{n}$
and $\tau_{n}$. 
\begin{equation}
\kappa_{n+1}=\kappa_{n}+1,\ \tau_{n+1}=0.
\end{equation}
\end{enumerate}
\end{enumerate}
\end{enumerate}
For this Nested Adaptation MCMC algorithm to be valid we must show that it
satisfies three assumptions: 
\begin{enumerate}
\item For each $(\iota,\ \theta)\in\bar{\Theta},\ P_{\iota,\theta}$ has
invariant distribution $\pi$. 
\item (diminishing adaptation): 
\[
\triangle_{n+1}\overset{_{def}}{=}\sup_{x\in\mathcal{X}}\Vert P_{\iota_{n},\theta_{n}}(x,\ \cdot)-P_{\iota_{n+1},\theta_{n+1}}(x,\ \cdot)\Vert_{\mathrm{TV}}
\]
converges in probability to zero, as $n\rightarrow\infty$, 
\item (containment): For all $\epsilon>0$, the sequence $\{M_{\epsilon}(\iota_{n},\ \theta_{n},\ X_{n})\}$
is bounded in probability, where 
\[
M_{\epsilon}(\iota,\ \theta,\ x)\overset{_{def}}{=}\inf\{n\ \geq1:\Vert P_{\iota,\theta}^{n}(x,\ \cdot)-\pi\Vert_{\mathrm{TV}}\leq\epsilon\}.
\]
\end{enumerate}
\begin{remark} Here the first assumption holds by construction. We
will show that by the design, our Nested Adaptation algorithm satisfies
the diminishing adaptation.\end{remark}
For $\iota\in\mathbf{\mathcal{I}},\ \theta,\ \theta'\in\Theta_{\iota}$,
define 
\[
D_{\iota}(\theta,\ \theta')\ \overset{_{def}}{=}\ \sup_{x\in\mathbf{\mathcal{X}}}\Vert P_{\iota,\theta}(x,\ \cdot)-P_{\iota,\theta'}(x,\ \cdot)\Vert_{\mathrm{TV}}.
\]
\begin{proposition} Suppose that $\mathbf{\mathcal{I}}$ is finite,
and for any $\iota\in\mathcal{\mathbf{\mathcal{I}}}$, the adaptation
function $H_{\iota}$ is bounded, and there exists $C<\infty$ such
that 
\[
D_{\iota}(\theta,\ \theta')\leq C\Vert\theta-\theta'\Vert.
\]
Then the diminishing adaptation holds. \end{proposition}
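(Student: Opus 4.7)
The plan is to decompose the step $n\to n+1$ by conditioning on the Bernoulli $B_{n+1}$. On $\{B_{n+1}=1\}$ the kernel can change arbitrarily, but this event has probability $p_{n+1}\to 0$. On $\{B_{n+1}=0\}$ we have $\iota_{n+1}=\iota_n$ so $\triangle_{n+1}=D_{\iota_n}(\theta_n,\theta_{n+1})$, and the hypotheses give a quantitative small-step bound. Combining these via a union bound yields convergence in probability.

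More concretely, first I would use the finiteness of $\mathcal{I}$ together with the hypothesis that each $H_\iota$ is bounded to define $M\overset{def}{=}\max_{\iota\in\mathcal{I}}\sup_{(\theta,x)}\|H_\iota(\theta,x)\|<\infty$. Then on $\{B_{n+1}=0\}$, equation (\ref{eq:2}) gives $\|\theta_{n+1}-\theta_n\|\leq M\gamma_{c_{n+1}}$, so the Lipschitz hypothesis yields
\[
\triangle_{n+1}\;=\;D_{\iota_n}(\theta_n,\theta_{n+1})\;\leq\;C\|\theta_{n+1}-\theta_n\|\;\leq\;CM\gamma_{c_{n+1}}.
\]
Combining with the trivial bound $\triangle_{n+1}\leq 2$ on $\{B_{n+1}=1\}$ gives $\triangle_{n+1}\leq (1-B_{n+1})\,CM\gamma_{c_{n+1}}+2B_{n+1}$.

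The main obstacle, and the one nonroutine step, is controlling $\gamma_{c_{n+1}}$: the internal clock $c_n=\kappa_n+\tau_n$ is \emph{not} monotone, because each external adaptation resets $\tau_n$ to $0$ (so $c_{n+1}$ can be smaller than $c_n$). To handle this I would split on $\sum_{k\geq 1}p_k$. If $\sum p_k<\infty$, the first Borel--Cantelli lemma shows that only finitely many external adaptations occur almost surely, so eventually $\kappa_n$ is constant and $\tau_n$ grows by one per step, giving $c_n\to\infty$ a.s. If $\sum p_k=\infty$, the $B_k$ are independent, so the second Borel--Cantelli lemma gives $\kappa_n\to\infty$ a.s., and since $c_n\geq\kappa_n$, again $c_n\to\infty$ a.s. In both regimes $\gamma_{c_{n+1}}\to 0$ almost surely, using $\lim_{m\to\infty}\gamma_m=0$.

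To conclude, for any $\epsilon>0$, a union bound gives
\[
\mathrm{P}(\triangle_{n+1}>\epsilon)\;\leq\;\mathrm{P}(B_{n+1}=1)+\mathrm{P}\!\left(\gamma_{c_{n+1}}>\epsilon/(CM)\right)\;\leq\;p_{n+1}+\mathrm{P}\!\left(\gamma_{c_{n+1}}>\epsilon/(CM)\right),
\]
and both terms tend to zero: the first since $p_n\to 0$ by assumption, the second since a.s.\ convergence of $\gamma_{c_{n+1}}$ to $0$ implies convergence in probability. This establishes diminishing adaptation.
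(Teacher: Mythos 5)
Your proof is correct and follows the same core decomposition as the paper's: condition on the Bernoulli variable $B_{n+1}$, bound the external-adaptation branch trivially by its probability $p_{n+1}$, and on the no-external-adaptation branch combine the boundedness of $H_\iota$ (uniformly over the finite set $\mathcal{I}$) with the Lipschitz hypothesis to get $\triangle_{n+1}\leq CM\gamma_{c_{n+1}}$. The two arguments diverge only at the final step: the paper bounds $\mathrm{E}(\triangle_{n+1})\leq 2p_{n+1}+C_1\gamma_{c_{n+1}}$ and invokes $c_n\to\infty$ with the remark that this is ``easy to see,'' so that convergence in probability follows from $L^1$ convergence; you instead apply a union bound directly to $\mathrm{P}(\triangle_{n+1}>\epsilon)$ and justify $\gamma_{c_{n+1}}\to 0$ almost surely. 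Your version is actually the more careful one on the one delicate point, since $c_{n+1}$ is random and non-monotone (each external adaptation resets $\tau$), which the paper glosses over both in writing $\gamma_{c_{n+1}}$ outside an expectation and in asserting $c_n\to\infty$ without argument. That said, your Borel--Cantelli case split on $\sum p_k$ is more machinery than needed: on every sample path either infinitely many external adaptations occur, in which case $\kappa_n\to\infty$, or only finitely many occur, in which case $\tau_n\to\infty$ eventually; either way $c_n=\kappa_n+\tau_n\to\infty$ pathwise, with no appeal to independence or summability of the $p_k$.
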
 \begin{proof}
We have 
\begin{eqnarray*}
\mathrm{E}(\triangle_{n+1}) & = & p_{n+1}\mathrm{E}(\triangle_{n+1}|B_{n+1}=1)+(1-p_{n+1})\mathrm{E}(\triangle_{n+1}|B_{n+1}=0),\\
 & \leq & 2p_{n+1}+\mathrm{E}(\triangle_{n+1}|B_{n+1}=0),\\
 & = & 2p_{n+1}+\mathrm{E}\ [D_{\iota_{n}}(\theta_{n},\ \theta_{n+1})],\\
 & \leq & 2p_{n+1}+C\mathrm{E}\ [\Vert\theta_{n+1}-\theta_{n}\Vert],\\
 & \leq & 2p_{n+1}+C_{1}\gamma_{c_{n+1}},
\end{eqnarray*}
where $c_{n+1}=\kappa_{n+1}+\tau_{n+1}$. It is easy to see that $c_{n}\rightarrow\infty$
as $n\rightarrow\infty$. The result follows since ${\displaystyle \lim_{n\rightarrow\infty}p_{n}=\lim_{n\rightarrow\infty}\gamma_{n}=0.}$
\end{proof}


\subsection{ On the containment assumption}
In general, the containment condition is more challenging to check than the
diminishing adaptation condition, and this technical assumption might not even be necessary sometimes 
\citep{roberts2007coupling}. Since the containment assumption is an assumption on the mixing time process
$\{M_\epsilon(\iota_n,\theta_n,X_n),\;n\geq 0\}$, adaptive MCMC algorithms for which the Markov kernels 
$P_{\iota,\theta}$ have similar sufficiently fast mixing behavior typically satisfy the containment assumption.
This intuition was rigorously established by \citep{roberts2007coupling} which showed that the containment
 assumption holds when the family  $\{P_{\iota,\theta}:(\iota,\theta)\in{\bar{\Theta}}\}$ possesses 
a certain simultaneous ergodicity property (either uniform, geometric, or sub-geometric).  
This means in a nutshell that all the Markov kernels $P_{\iota,\theta}$ have qualitatively the same convergence rate, be it uniform, geometric or subgeometric.  Hence to check the containment assumption, we typically need to study the convergence rate of each kernel $P_{\iota,\theta}$,
and verify that all the rates are within a multiplicative constant of each other. Despite recent advances in Markov Chain Monte Carlo theory, there is no general theory that can be easily employed to establish the convergence rate of a given MCMC algorithm. The rate depends as much on the algorithm, as on the target distribution. In this context, it is clear that verifying the containment assumption is a 
mathematically tedious endeavor that is beyond the scope of this paper. We refer the interested 
reader to \citep{roberts2007coupling,andrieu:thoms:08,atchade:fort:10,atchade:etal:11} for some detailed examples.

\section{Example algorithms}\label{method}

We present one specific approach as an example of a Nested Adaptation algorithm. 
Our approach to ``outer adaptation'' will be to identify the ``worst-mixing dimension'' 
(i.e., some parameter or latent state of the statistical model) and update the kernel by assigning different sampler(s) for that dimension. To explain the method, we will give some terminology for describing our algorithm. In particular, we will define a valid kernel, 
MCMC efficiency, and worst-mixing dimension. We will define a set of candidate samplers for a given dimension, which could include scalar samplers or block samplers. 
In either case, a sampler may also have internal adaptation for each parameter or combination. 
To implement the internal clock of each sampler  ($c_{n}$ of the general algorithm), we need to formulate all internal adaptation in the framework using equation \ref{eq:2}. 
We use $P$ (without subscripts) in this section to represent $P_{\iota, \theta}$ 
of the general theory, so the kernel and parameters are implicit.

\subsection{Valid kernel}\label{sec:valid}

Assume our model of interest is $\mathcal{M}$, which could be represented
as a graphical model where vertices or nodes represent states or data
while edges represent dependencies among them. Here we are using ``state''
as Bayesian statisticians do to mean any dimension of the model to be sampled by
MCMC, including model parameters and latent states. We denote the
set of all dimensions of the target distribution, as $\mathcal{X}=\{\mathcal{X}_{1},\ldots,\mathcal{X}_{m}\}$.
Since we will construct a new MCMC kernel as an ordered set of samplers
at each outer iteration, it is useful to define requirements for a
kernel to be valid. We require that each kernel, if used on its own,
would be a valid MCMC to sample from the target distribution $\pi(X),\, X\in\mathcal{X}$
(typically defined from Bayes Rule as the conditional distribution
of states given the data). This is the case if it satisfies the detailed
balance equation, $\pi=P\pi$.

In more detail, we need to ensure that a new MCMC kernel does not
omit some subspace of $\mathcal{X}$ from mixing. Denote the kernel
$P$ as a sequence of (new choice of $c_{n+1}$) samplers $P_{i},\, i=1\ldots j,$
such that $P=P_{j}P_{j-1}\ldots P_{1}$. By some abuse of terminology,
$P$ is a valid kernel if each sampler $P_{i}$ operates on a non-empty
subset $b_{i}$ of $\mathcal{X}$, satisfying ${\displaystyle \bigcup_{i=1}^{j}b_{i}=\mathcal{X}}$.

At iteration $n$, assume the kernel is $P^{(n)}$ and the samples
are $X_{n}=(X_{n,1},\ \ldots,\ X_{n,m})$ where the set of initial
values is $X_{0}$. For each dimension $\mathcal{X}_{k}$, $k=1,\ldots,m$
let $\mathbf{X}_{k}=\{X_{0,k},\ X_{1,k},\ \ldots\}$ be the scalar
chain of samples of $\mathcal{X}_{k}$.

\subsection{Worst mixing state and MCMC efficiency}\label{sec:worst}

We define MCMC efficiency for state $\mathcal{X}_{k}$ from a sample
of size $N$ from kernel $P$ as effective sample size per computation
time 
\[
\omega_{k}(N,P)=\frac{N/\tau_{k}(P)}{t(N,P)},
\]
where $t(N,P)$ is the computation time for kernel $P$ to run $N$
iterations (often $t(N,P)\approx Nt(1,P)$) and $\tau_{k}(P)$ is
the integrated autocorrelation time for chain $\mathbf{X}_{k}$ defined
as 
\[
\tau_{k}=1+2\sum_{i=1}^{\infty}\mathrm{cor}(X_{0,k},X_{i,k}),
\]
\citet{straatsma1986estimation}. The ratio $N/\tau_{k}$ is the effective
sample size (ESS) for state $\mathcal{X}_{k}$ \citep{roberts2001optimal}.
Note that $t(N, P)$ is computation time for the entire kernel, not
just samplers that update $\mathcal{X}_{k}$. $\tau_{k}$ can be interpreted
as the number of effective samples per actual sample. The worst-mixing
state is defined as the state with minimum MCMC efficiency among all
states. Let $k_{min}$ be the index of the worst-mixing state, that
is 
\[
k_{min}=\arg\min_{k}\tau_{k}^{-1}.
\]
Since the worst mixing dimension will limit the validity of the entire
posterior sample \citep{thompson2010graphical}, we define the efficiency
of a MCMC algorithm as $\omega_{k_{min}}(N,P)$, the efficiency of
the worst-mixing state of model $\mathcal{M}$.

There are several ways to estimate ESS, but we use $\texttt{effectiveSize}$
function in the R coda package \citep{plummer2006coda} since this
function provides a stable estimation of ESS. This method, 
which is based on the spectral density at frequency zero, 
has been proven to have the highest convergence rate, thus giving a more accurate
and stable result \citep{thompson2010graphical}.

\subsection{Candidate Samplers}\label{sec:candidate}

A set of candidate samplers $\{P_{j},\: j\in\mathcal{S}\}$ is a list
of all possible samplers that could be used for a parameter of the model
$\mathcal{M}$. These may differ depending on the parameter's characteristics and role in the model (e.g., whether there is a valid Gibbs sampler,
or whether it is restricted to $[0,\infty)$). In addition to univariate
candidate samplers, nodes can also be sampled by block samplers. Denote
$|b|$ the number of elements of $b.$ If $|b_{i}|>1$, $P^{(n)}_{i}$,
the sampler applied to block $b_{i}$ at iteration $n$, is called
a block sampler; otherwise it is a univariate or scalar sampler.

In the examples below we considered up to four univariate candidate
samplers and three kinds of block samplers. The univariate samplers
included adaptive Metropolis-Hastings (AMH), adaptive Metropolis-Hastings on a log
scale (AMHLS) for states taking only positive real values, Gibbs samplers
for states with a conjugate prior-posterior pairing, and slice samplers.
The block samplers included adaptive Metropolis-Hastings with multivariate
normal proposals, automated factor slice sampler \citep{tibbits2014automated} (slice samplers in
a set of orthogonal rotated coordinates), and automated factor random
walk (univariate random walks in a set of orthogonal rotated coordinates).
These choices are by no means exhaustive but serve to illustrate the
algorithms here.
\subsubsection{Block samplers and how to block}\label{sec:block}

\citet{turek2017automated} suggested different ways to block the states
efficiently: (a) based on correlation clustering, (b) based on model
structure. Here we use the first method. 

At each iteration, we use the generated samples to create the empirical
posterior correlation matrix. To stabilize the estimation, all of the samples
are used to compute a correlation
matrix $\rho_{d\times d}$. This in turn is used to make a distance
matrix $D_{d\times d}$ where $D_{i,j}=1-|\rho_{i,j}|$ for $i\neq j$
and $D_{i,i}=0$ for every $i$, $j$ in $1,\ldots,d$. To ensure
minimum absolute pairwise correlation between clusters, we construct a
hierarchical cluster tree from the distance matrix $D$ (\citet{everitt2011hierarchical} chapter 4). Given a selected
height, we cluster the hierarchical tree into distinct groups of states.
Different parts of the tree may have different optimal heights for forming blocks. Instead of using a global height to cut the tree, we only choose a block that
contains the worst-mixing state from the cut and keep the other
samplers unchanged. At each outer iteration, if the algorithm is choosing a new block sampler as part of a new kernel, it will cut the tree at a lower correlation threshold, thus creating a cluster of nodes with lower posterior correlations. In our implementation, we use the R function $\texttt{hclust}$
to build the hierarchical clustering tree with ``complete linkage''
from the distance matrix
$D$. By construction, the absolute correlation between states within
each group is at least $1-h$ for $h$ in $[0,1]$. We then use the R
function $\texttt{cutree}$ to choose a block that contains the worst-mixing state. 
This process is justified in the sense that the partitioning
adapts according to the model structure through the posterior correlation.
The details and validity of the block sampling in our general framework are provided in Appendix A. 

\subsection{How to choose new samplers}\label{sec:choice}
To choose new samplers to compose a new kernel, we determine the worst-mixing 
state and choose randomly
from candidate samplers to replace whatever sampler was updating it
in the previous kernel while keeping other samplers the same. There
are some choices to make when considering a block sampler. If the worst-mixing parameter is $x$, 
and the new kernel will use a block sampler for $x$ together with one or more parameters $y$, 
we keep the current sampler(s) used for $y$. 
Future work can consider other schemes such as changing group of samplers together
based on model structure.

\subsection{Internal clock variables}
\label{sec:internal} In the algorithm \ref{alg1},
$\theta$ represents the internal adaptation parameter of a particular
sampler and $c$ represents its internal clock. In general, an internal
clock variable is defined as a variable used in a sampler to determine
the size of internal adaptation steps such that any internal adaptation
would converge in a typical MCMC setting. An example of an internal clock
variable is a number of internal iterations that have occurred. To
use a sampler in the general framework, we need to establish what
are its internal adaptation and clock variables. A few examples of
internal adaptation variables of different samplers are summarized
as follows:
\begin{itemize}
\item For adaptive Metropolis-Hastings: proposal scale is used.
\item For block adaptive Metropolis-Hastings: proposal scale and covariance matrix
are used.
\item For automated factor slice sampler: covariance matrix (or equivalent,
i.e. coordinate rotation) is used.
\item For automated factor random walk: covariance matrix (ditto) and proposal
scales for each rotated coordinate axis are used. 
\end{itemize}

These internal adaption variables are set to default initial values when its sampler is first used. 
After that, they are retained along with internal clock variables so that whenever 
we revisit a sampler, 
we will use the stored values to set up this sampler.
This setting guarantees the diminishing adaption property, 
which is essential for the convergence of the algorithm. 
Pseudo-code for Nested Adaptation MCMC is given in Algorithm 1.

\begin{algorithm}[htp]
\caption{Nested Adaptation MCMC} \label{alg1}
\small
\begin{algorithmic}[1]
\INPUT  
\Statex Bayesian model with initial state (including latent variables) ${X}_{0}$ 
\Statex $\{p_{n}, n \in \mathbb{N} |p_n\in(0,1),\lim_{n}p_{n}=0\}$, maximum iteration $M$
\Statex Candidate samplers $\{P_{j},  j \in \mathcal{S}\}$ 
\Statex $P_{\iota_0, \theta_0}:=$ ordered set of initial samplers $\{P^{(0)}_j\}_{j\in \mathcal{S}}$ from Bayesian model  
\Ensure
\Statex An ordered set of samplers $\{P_{i^*}\}_{i^*\in \mathcal{S}}$ with the best MCMC efficiency so far 
\stb Initialize $\mathrm{EFF}$, $\mathrm{EFF_{best}}$, $n$, $\kappa_0$, $\tau_0$, $c_0$  to $0$ \Comment Denote MCMC efficiency $\mathrm{EFF}$
\While {($\mathrm{EFF}$ $\ge$ $\mathrm{EFF_{best}}$) or ($n < M$)} 
  \State Sample $N$ samples from the current sampler set $\{P_j^{(n)}\}_{j\in \mathcal{S}}$
  \State Store internal clocks $c_n$ and adaption variables $\theta_{n}$ for each sampler \Comment Section \ref{sec:internal}
  \State Compute $\mathrm{EFF_k}=\omega_{k}(N,P)=\frac{N/\tau_{k}(P)}{t(N,P)}$	\Comment $k$ is an index of parameters
  \State Identify $k_{min}=\arg\min_{k}\tau_{k}^{-1}$, $\mathrm{EFF}=\mathrm{EFF_{k_{min}}}$  \Comment See Section \ref{sec:worst}
  \If {($\mathrm{EFF}$ $\ge$ $\mathrm{EFF_{best}}$)} 
      \State Set $\{P_{i^*}\}_{i^*\in \mathcal{S}}=\{P_i^{(n)}\}_{i\in \mathcal{S}}$
      \State Set $\mathrm{EFF_{best}}=\mathrm{EFF}$
  \Else
      \State Set $\{P_i^{(n)}\}_{i\in \mathcal{S}}=\{P_{i^*}\}_{i^*\in \mathcal{S}}$ 
  \EndIf
  \State Draw $B_{n+1} \sim \mathrm{Bern}(p_{n+1})\in \{0,1\}$
  \If {$B_{n+1}=0$} 
    \State $\kappa_{n+1}= \kappa_{n}$, $\tau_{n+1}=\tau_n+1$, $c_{n+1}=\kappa_{n+1}+\tau_{n+1}$
  \Else
    \State $\kappa_{n+1}= \kappa_{n}$+1, $\tau_{n+1}=0$, $c_{n+1}=\kappa_{n+1}+\tau_{n+1}$ 
    \State Set $P^{(n+1)}_{i}=P^{(n)}_{i}, i \neq k_{\mathrm{min}}$, choose $P^{(n+1)}_{k_{\mathrm{min}}}$ from candidate samplers \Comment Section \ref{sec:choice}   
    \If {($P^{(n+1)}_{k_{\mathrm{min}}}$ has been used before)} 
      \State Use $c_n$, $\theta_{n}$ to set up the sampler $P^{(n+1)}_{k_{\mathrm{min}}}$ 
    \Else
      \State Use default internal adaptation value of $P^{(n+1)}_{k_{\mathrm{min}}}$ \Comment Section \ref{sec:internal}
    \EndIf
  \EndIf
  \State Set $n = n+1$
\EndWhile
\end{algorithmic}
\end{algorithm}

\section{Examples}\label{example}

In this section, we evaluate our algorithm on some benchmark examples
and compare them to different MCMC algorithms. In particular, we compare
our approach to the following MCMC algorithms.
\begin{itemize}
\item All Scalar algorithm: Every dimension is sampled using an adaptive
scalar normal random walk sampler. 
\item All Blocked algorithm: All dimensions are sampled in one adaptive
multivariate normal random walk sampler. 
\item Default algorithm: Samplers are assigned as follows.  When possible, 
a conjugate (Gibbs) sampler is used. Otherwise, adaptive random-walk Metropolis-Hastings samplers are used.
These will be block samplers for parameters arising from multivariate distributions and scalar samplers otherwise.  
\item Auto Block algorithm: The Auto Block method
  \citep{turek2017automated} searches blocking schemes based on
  hierarchical clustering from posterior correlations to determine a
  highly efficient (but not necessarily optimal) set of blocks that
  are sampled with multivariate normal random-walk samplers. Thus,
  Auto Block uses only either scalar or multivariate adaptive
  random walk, concentrating more on partitioning the correlation
  matrix than trying different sampling methods.  Note that the
  initial sampler of both the Auto Block algorithm and our
  proposed algorithm is the All Scalar algorithm.
\end{itemize}
All experiments were carried out using the NIMBLE package \citep{nimble2017}
for R \citep{R2013} on a cluster using $32$ cores of Intel Xeon E5-2680
$2.7$ Ghz with $256$ GB memory. Models are coded using NIMBLE's
version of the BUGS model
declaration language \citep{lunn2000winbugs,lunn2012bugs}. All MCMC
algorithms are written in NIMBLE, which provides user-friendly interfaces
in R and efficient execution in custom-generated C++, including matrix operations in the C++ Eigen
library \citep{guennebaud2010eigen}.

To measure the performance of an MCMC algorithm, we use MCMC
efficiency.  MCMC efficiency depends on ESS, estimates of which can have a high variance for a short Markov chain.  This presents
a tuning-parameter trade-off for the Nested Adaptation method: Is it better to
move cautiously (in sampler space) by running long chains for each outer adaptation in
order to gain an accurate measure of efficiency, or is it better to
move adventurously by running short chains, knowing that some
algorithm decisions about samplers will be based on noisy efficiency
comparisons?  In the latter case, the final samplers may be less optimal,
but that may be compensated by the saved computation time.  To explore this
trade-off,  we try our Nested Adaptation algorithm with different sample
sizes in each outer adaptation and label results accordingly.  For
example, Nested Adaptation 10K will refer to the Nested Adaptation method with
samples of 10,000 per outer iteration. 

We present algorithm comparisons in terms of time spent in an
adaptation phase, final MCMC efficiency achieved, and the time
required to obtain a fixed effective sample size (e.g., 10,000).  Only
Auto Block and Nested Adaptation have adaptation phases.  An important
difference is that Auto Block did not come with a proof of valid
adaptive MCMC convergence (it could be modified to work in the current
framework, but we compare to the published version).  Therefore,
samples from its adaptation phase are not normally included in the
final samples, while the adaptation samples of Nested Adaptation can be
included.  

To measure final MCMC efficiency, we conducted a single long run of
length $N$ with the final kernel of each method solely for the purpose
of obtaining an accurate ESS estimate.  One would not normally do such
a run in a real application.  The calculation of time to obtain a
fixed effective sample size incorporates both adaptation time and
efficiency of the final samplers. For both Nested Adaptation and Auto Block,
we placed them on a similar playing field by assuming for this
calculation that samples are not retained from the adaptation phase,
making the results conservative.

For all comparisons, we used $20$ independent runs of each method and
present the average results from these runs.  To show the variation in
runs, we present box-plots of efficiency in relation to computation
time from the $20$ runs of Nested Adaptation.  The final (right-most) box-plot
in each such figures shows the $20$ final efficiency estimates from
larger runs.  Not surprisingly, these can be lower than obtained by
shorter runs.  These final estimates are reported in the tables.

A public Github repository containing scripts for reproducing our
results may be found at \href{https://github.com/nxdao2000/AutoAdaptMCMC}{https://github.com/nxdao2000/AutoAdaptMCMC}. 
Some additional experiments are also provided there.

\subsection{Linear state space model}

First, to illustrate the distinction between the Nested Adaptation algorithm
and other methods, we will use a linear state space model
\citep{durbin2012time} in which all parameters are fixed. Such
a model will allow us to provide a simple assessment
of the performances of the methods for a range of situations, including
those in which the models were true.

Let $x_{t}$ be the latent state at time $t$, $y_{t}$ be the observed data, and suppose we have the number of time
points $T=50$. Let the initial priors be: 
\[
a\sim Uniform(0,1),
\]
\[
b\sim N(1,1),
\]
\[
\mu_{0}\sim N(0,1),
\]
and observation density $y_{t}\sim N(bx_{t},\ 1)$ for $t=2,\ \ldots,50.$
State transitions are governed by a first order autoregressive (AR) process
\[
x_{t}\sim N(ax_{t-1},1)\quad for\;t=2,...,50.
\]

\begin{table}
\begin{center}
\begin{tabular}{lrrrrr}
 \hline
  Algorithms& \rule[-1.5mm]{0mm}{0mm} Adapt time
 & Efficiency & Time to $10^4$ effective samples\\ 
  \hline
All Blocked  &  0.00  & 52.359 & 1910\\ 
Default  & 0.00    & 1198.128 & 83\\ 
All Scalar  & 0.00    & 1218.060 & 82\\
Auto Block  & 22.92  & 1277.639 & 101 \\ 
Nested Adaptation 5K  & 0.9    & 1796.079  & 56 \\
Nested Adaptation 10K  & 1.7   & 1693.929 & 61  \\  
Nested Adaptation 20K  & 3.3   & 1616.752 & 65  \\  
\hline
\end{tabular}
\caption{Summary results of different MCMC algorithms for the linear state space model. Runtime is presented
as seconds, and efficiency is in units of effective samples produced
per second of algorithm runtime. Time to $N$ effective samples is computed by $N/\mathrm{efficiency}$ for 
static algorithms and that plus adaptation time for Auto Block and Nested Adaptation algorithms.}
\label{table:1}
\end{center}
\end{table}

\begin{figure}
\begin{centering}
\vspace{-0.5cm}
 \includegraphics[width=6.7cm,height=7cm]{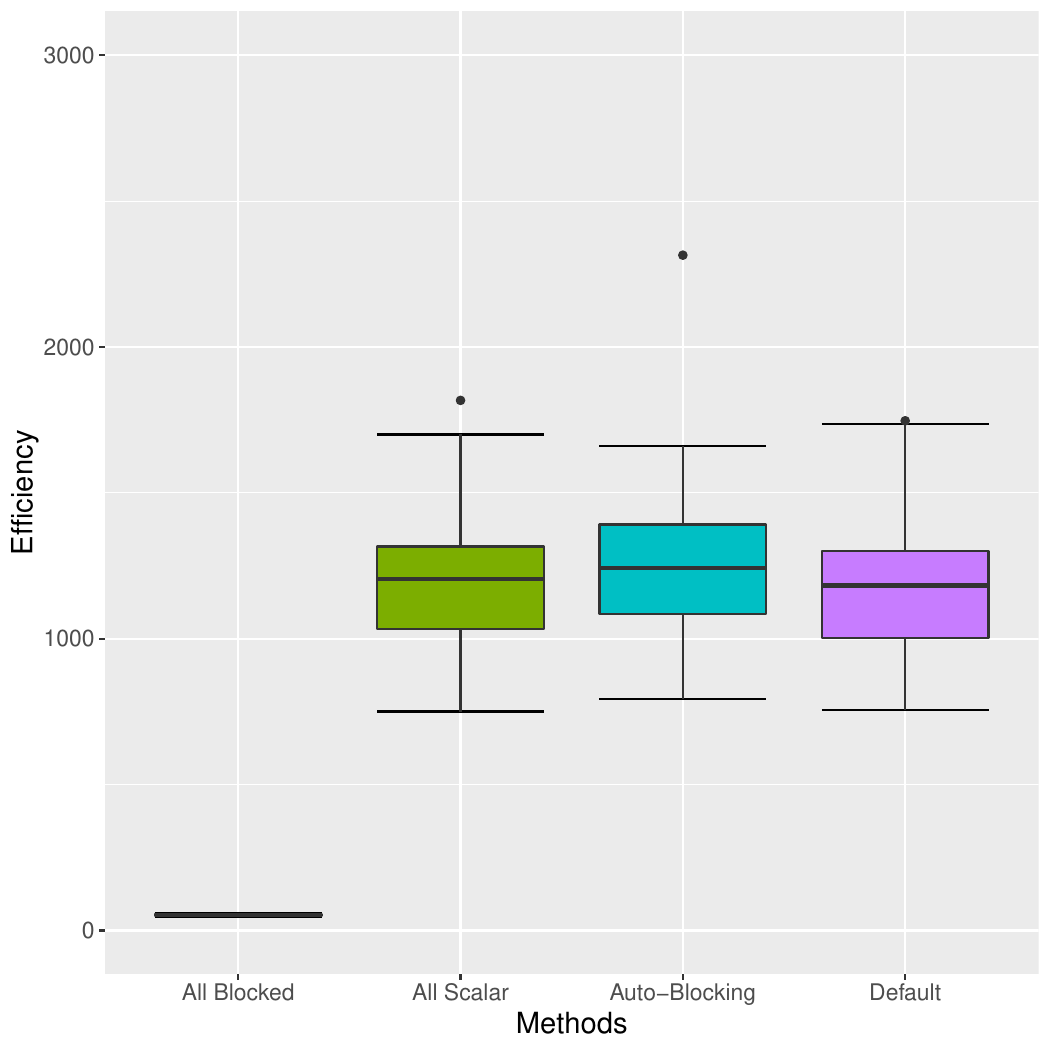}\includegraphics[width=6.5cm,height=7cm]{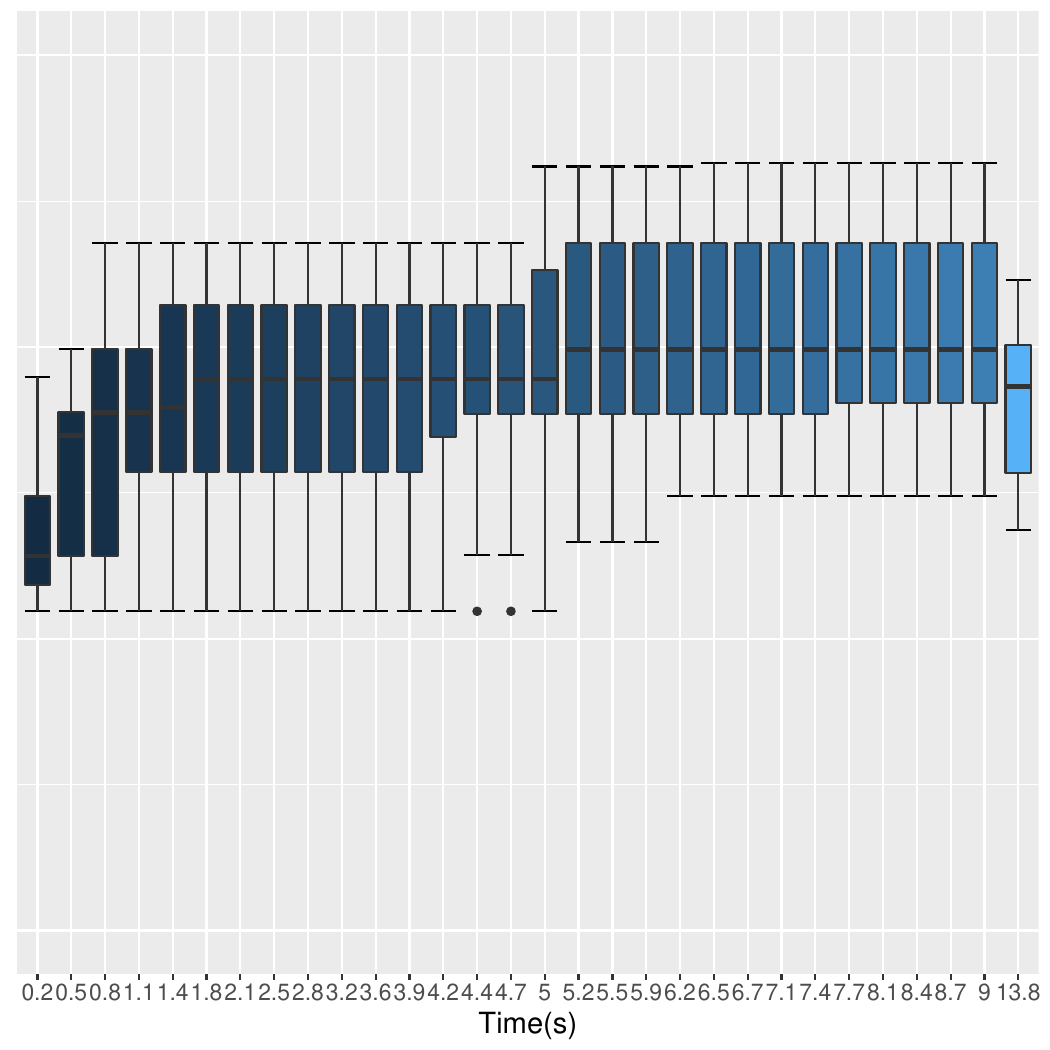}
\par\end{centering}
\caption{MCMC efficiencies of different methods for linear state space model. The left panel shows the box-plots of
MCMC efficiencies of All Blocked, All Scalar, Auto Block and Default algorithms computed from 20 replications.
The right panel shows the box-plots of MCMC efficiencies of Nested Adaptation 5K algorithm computed from 20 replications at each outer adaptation. The last (right-most) box-plot is
computed from the chain of 100000 samples generated. The time axis shows the average computational time of 20 replications.}
\label{fig:linear} 
\end{figure}

$x_{1}$ and $y_{1}$ were randomly generated from simulating $x_{1}\sim N(\mathrm{\mu_{0}},\ 1)$,
$y_{1}\sim N(x_{1},\ 1)$ with $\mu_{0}=0$ while $x_{t}$ and $y_{t}$
are subsequently generated using transition density with the parameter values
$a=0.2$ and $b=2$. These values were arbitrarily chosen to cover
the range observed for the explanatory variables; we desire high efficiency regardless of model fit, 
so the particular choice of this simple model is tangential to our main points.  

In this relatively simple example, we try our Nested Adaptation
algorithm with large numbers of iterations in each outer adaptation
as well as a large number of iterations for the final efficiency.
Specifically, we used samples sizes of 5000, 10000 and 20000 per
outer adaptation and we used $N=100000$ for computing final efficiency.

For this example (Table \ref{table:1}) All Scalar sampling produces
MCMC efficiency of about $121.8$, while the All Blocked algorithm,
which consists of a single block sampler of dimension $53$, has
MCMC efficiency of approximately $5.2$. All Blocked
samples all $53$ dimensions jointly, which requires computation
time roughly double that of All Scalar and yields only rather low
ESS. The Default algorithm performs similarly
to Auto Block but worse than Nested Adaptation algorithm. Since Auto Block converges to All Scalar in this case, the Auto Block algorithm performs no better than standard All Scalar (Figure~\ref{fig:linear}) as would be expected.
It is clear that Nested Adaptation method has dramatic improvements
even when we take into account the adaptation time. Amongst these
Auto methods, Auto Block performs worse than all Nested Adaptation 5K, Nested Adaptation 10K and Nested Adaptation 20K
in both computational time and MCMC efficiencies. Overall, Nested Adaptation
5K appears to be the most efficient method in terms of time to 10000
effective samples. One interpretation is that Nested Adaptation 5K trades
off well between adaptation time and MCMC efficiency in this model.
The final samplers from Nested Adaptation included an automated factor slice sampler on block (a, b), 
an automated factor random walk sampler on block ($\mu_0$, $x_1$) and adaptive random-walk Metropolis-Hastings samplers on other nodes.

Figure \ref{fig:linear} illustrates that Nested Adaptation algorithm outperforms the other
algorithms. This comes from both the flexibility
to trade-off the number of outer adaptations vs. adaptive time to
reach a good sampler as well as the larger space of kernels being
explored. Since MCMC efficiency is highly dependent upon hierarchical
model structures, using scalar and multivariate normal random walks
alone, as done by the Auto Block algorithm, can be quite limiting.
Nested Adaptation, can overcome this limitation with the flexibility to choose
different type of samplers. We will see that more strongly in the
next examples, where the models are more complex.

\subsection{A random effect model}

We consider the ``litters'' model, which is an original example
model provided with the MCMC package WinBUGS. This model is chosen
because of its notoriously slow mixing, which is due to the strong
correlation between parameter pairs. It
is desirable to show how much improvement can be achieved compared to other approaches
on this benchmark example. The purpose of using this simple example is to establish the potential utility of the Nested Adaptation approach, while saving more advanced applications for future work. 
In this case, we show that our algorithm
indeed outperforms by a significant margin the other approaches. This
model's specification is given following \citet{deely1981bayes} and
\citet{kass1989approximate} as follows.

\begin{table}
\begin{center}
\begin{tabular}{lrrrrr}
 \hline
 Algorithms& \rule[-1.5mm]{0mm}{0mm} Adapt time
 & Efficiency & Time to $10^4$ effective samples\\ 
  \hline
All Blocked  &  0.00  & 0.5855 & 17079\\ 
Default  & 0.00    & 1.8385 & 5439\\ 
All Scalar  & 0.00    & 1.6870 & 5928\\
Auto Block  & 21.97  & 12.1205 & 847 \\ 
Nested Adaptation 50K  & 4.09    & 16.4532 & 612 \\
Nested Adaptation 70K  & 5.95    & 18.0358 & 560 \\
Nested Adaptation 100K  & 8.34   & 17.6278 & 576  \\  
\hline
\end{tabular}
\caption{Summary results of different MCMC algorithms for the litters model. Runtime is presented
as seconds, and efficiency is in units of effective samples produced
per second of algorithm runtime. Time to $N$ effective samples is computed by $N/\mathrm{efficiency}$ for 
static algorithms and that plus adaptation time for Auto Block and Nested Adaptation algorithms.}
\label{table:2}
\end{center}
\end{table}

\begin{figure}
\begin{centering}
\vspace{-0.5cm}
 \includegraphics[width=6.5cm,height=7cm]{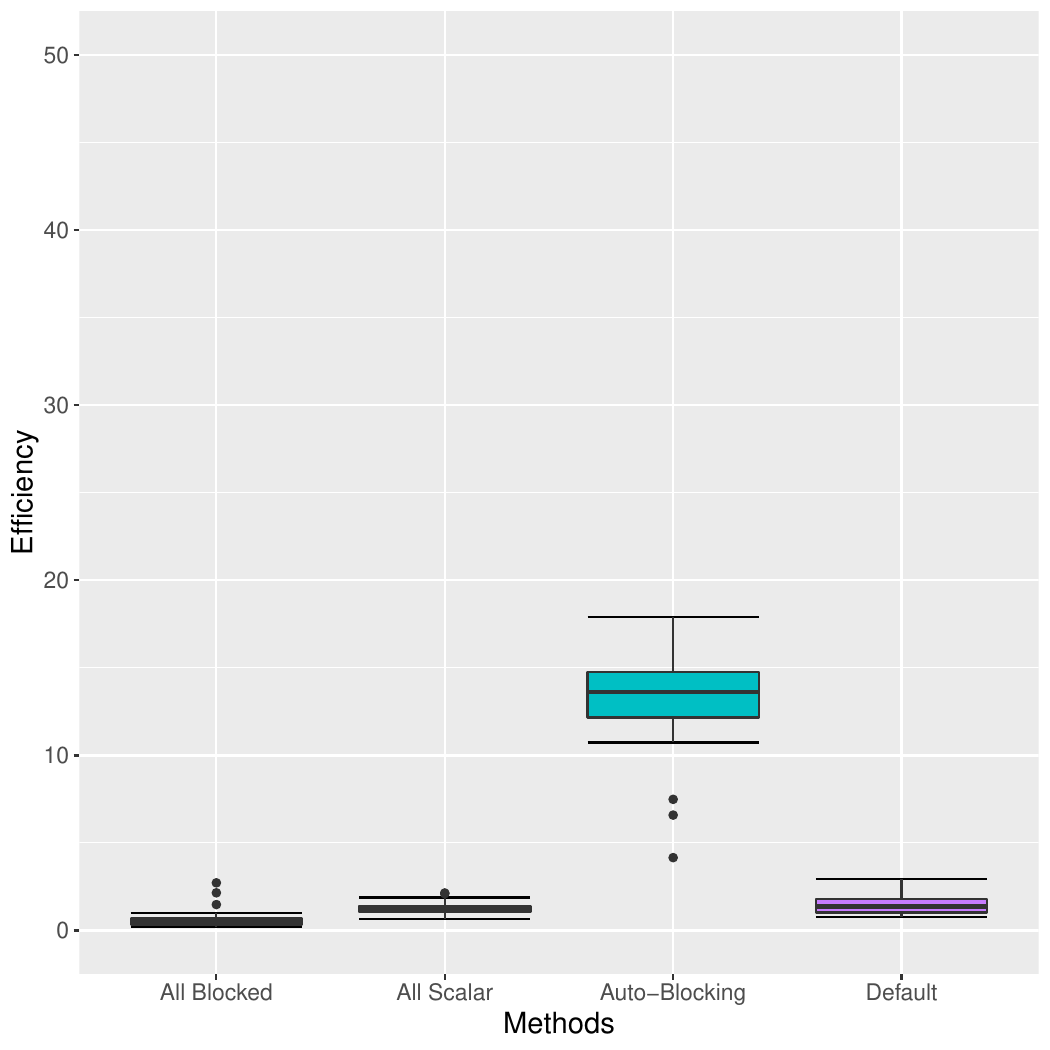}\includegraphics[width=6.5cm,height=7cm]{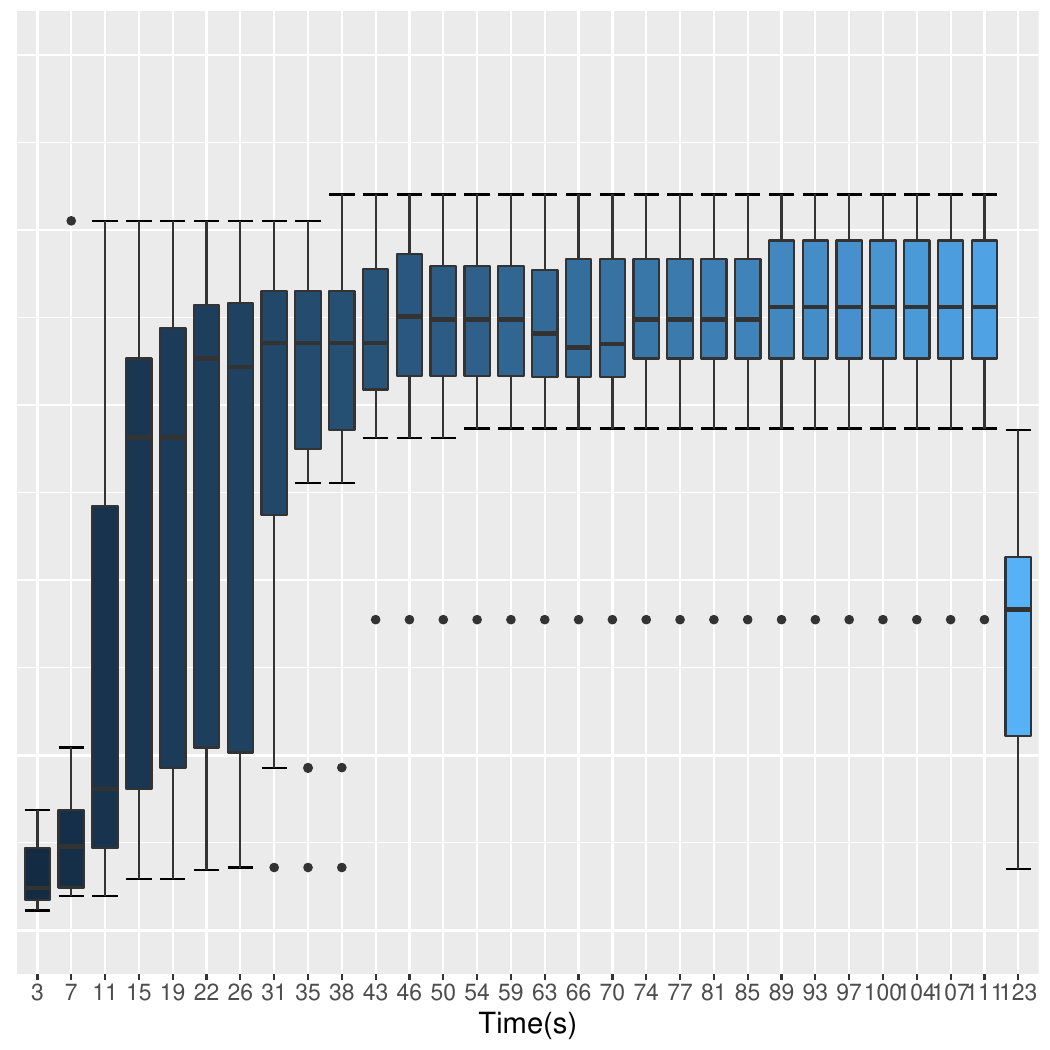}
\par\end{centering}
\caption{MCMC efficiencies of different methods for litters model. The left panel shows the box-plots of
MCMC efficiencies of All Blocked, All Scalar, Auto Block and Default algorithms computed from 20 replications.
The right panel shows the box-plots of MCMC efficiencies of Nested Adaptation 50K algorithm computed from 20 replications at each outer adaptation. The last (right-most) box-plot is
computed from the chain of 300000 samples generated. The time axis shows the average computational time of 20 replications.}
\label{fig:toy1} 
\end{figure}

Suppose we observe the data in $i$ groups. In each group, the data
$y_{ij}$, $j=\{1,\ldots,n\}$ are conditionally independent given
the parameters $p_{ij}$, with the observation density 
\[
y_{ij}\sim \mathrm{Bin}(n_{ij},p_{ij}).
\]
In addition, assume that $p_{ij}$ for fixed $i$ are conditionally
independent given the ``hyper-parameters'' $\alpha_{i}$, $\beta_{i}$, with conjugate density 
\[
p_{ij}\sim \mathrm{Beta}(\alpha_{i},\beta_{i}).
\]
Assume that $\alpha_{j}$, $\beta_{j}$ follow prior densities, 
\[
\alpha_{1}\sim \mathrm{Gamma}(1,0.001),
\]
\[
\beta_{1}\sim \mathrm{Gamma}(1,0.001),
\]
\[
\alpha_{2}\sim \mathrm{Uniform}(0,100),
\]
\[
\beta_{2}\sim \mathrm{Uniform}(0,50).
\]

Following the setup of \citet{rue2005gaussian} as discussed in \citet{turek2017automated},
we could jointly sample the top-level parameters and conjugate latent
states as the beta-binomial conjugacy relationships allow the use 
of what \citet{turek2017automated} call cross-level sampling, but, for demonstration purposes, we do not
include this here. 

Since the litters model mixes poorly, we run a large
number of iterations (i.e. $N=300000$) to produce stable estimates of
final MCMC efficiency. We start both Auto Block and Auto
Adapt algorithms with All Scalar and adaptively explore the
space of all given candidate samplers.   We use Nested Adaptation with
either 50000, 70000 or 100000 iterations per outer adaptation.

Results (Table \ref{table:2}) show that Auto Block generates
samples with MCMC efficiency about seven-fold, six-fold and
twenty-fold that of the All Scalar, Default and All
Blocked methods, respectively. We can also see that as the outer
adaptation sample size increases, the performance of Nested Adaptation
improves.  Final MCMC efficiencies of Nested Adaptation 50K, Nested Adaptation
70K and Nested Adaptation 100K are 135\%, 148\% and 145\% of MCMC
efficiency of Auto Block, respectively. In addition, the
adaptation time for all cases of Nested Adaptation are much shorter than
for Auto Block.  Combining adaptation time and final efficiency
into the resulting time to 10000 effective samples, we see that in this
case, larger samples in each outer iteration are worth their
computational cost.
In this example, the final samplers from Nested Adaptation included automated factor random walk samplers on blocks 
($\alpha_1$, $\beta_1$) and ($\alpha_2$, $\beta_2$), a slice sampler on $p_{1,16}$ and adaptive Metropolis-Hastings samplers on all other nodes

Figure \ref{fig:toy1} shows the box-plots
computed from 20 independent runs on the litters model of All Blocked,
All Scalar, Auto Block, Default and Nested Adaptation 50K
algorithms. The left panel of the figure confirms that MCMC efficiency
of Auto Block is well dominated that of other static adaptive
algorithms. The right panel of the figure shows the MCMC efficiency of
Nested Adaptation 50K gradually improves with time. The right-most box-plot
verifies that the MCMC efficiency of selected samplers from
Nested Adaptation algorithm (computed from large samples) is better than that of Auto Block algorithm. Last but not least,
Nested Adaptation algorithms are much more efficient than Auto Block in
the sense that we can keep every sample while Auto Block algorithm
throws away most of the samples.

\subsection{Spatial model}

In this section, we consider a hierarchical spatial model as the final
example. We use the classical scallops dataset for this model. This
dataset is chosen since we want to compare our approach with other
standard approaches in the presence of spatial dependence.  This data
collects observations of scallop abundance at 148 locations from the New
York to New Jersey coastline in 1993. It was surveyed by the Northeast
Fisheries Science Center of the National Marine Fisheries Service and
made publicly available at
http://www.biostat.umn.edu/\textasciitilde{}brad/data/myscallops.txt.
It has been analyzed many times, such as
\citet{ecker1994geostatistical,ecker1997bayesian,banerjee2014hierarchical}
and references therein. Following
\citeauthor{banerjee2014hierarchical}, assume the log-abundance
$\mathbf{g}=(g_{1},\ldots,g_{N})$ follows a multivariate normal
distribution with mean $\bm{\mu}$ and covariance matrix $\mathbf{\Sigma}$,
defined by covariances that decay exponentially as a function of distance. Specifically,
let $y_{i}$ be measured scallop abundance at site $i$,
$d_{i,j}$ be the distance between sites $i$ and $j$, and $\rho$ be a
valid correlation. Then
\[
\mathbf{g}\sim\mathrm{N}\left(\bm{\mu},\mathbf\Sigma\right),
\]
where each component $\Sigma_{ij}=\sigma^{2}exp(-d_{i,j}/\rho).$

We model observations as $y_{i} \sim
\mathrm{Poisson}(\mathrm{exp}(g_{i}))$.  Priors for $\sigma$ and
$\rho$ are Uniform over a large range of interest.
The parameters in the posterior distribution are expected to be correlated
as the covariance structure induces a trade-off between $\sigma$
and $\rho$. This can be sampled well by Auto Block algorithm,
and we would like to show that our approach can achieve even higher
efficiency with lower computational cost of adaptation.

This spatial model, with 858 parameters, is computationally expensive
to estimate. Therefore, we will use Nested Adaptation 5K, Nested Adaptation
10K and Nested Adaptation 20K algorithms for comparison and run
$N=50000$ for estimating final efficiency. 
In this case, the typical final samplers from Nested Adaptation included an automated factor random walk sampler 
on block ($\sigma$, $\rho$), adaptive random walk samplers on blocks ($\mu$, $g_{34}$, $g_{66}$, $g_{148}$) and ($g_{39}$, $g_{52}$), an 
automated factor slice sampler on block ($g_{11}$, $g_{12}$, $g_{33}$, $g_{48}$, $g_{92}$), 
slice samplers on $g_{54}$, $g_{65}$, and $g_{68}$, and adaptive random-walk Metropolis-Hastings samplers on the rest.

As can be seen from Table \ref{table:3}, All Blocked and
Default algorithms mix very poorly, resulting in extremely low
efficiencies of 0.01 and 0.002, respectively. The All Scalar
algorithm, while achieving higher ESS, runs slowly because large matrix
calculations are needed for every univariate sampler. The Auto Block
algorithm, on the other hand, selects an optimal threshold to cut the
entire hierarchical clustering tree into different groups, increasing
the ESS about 3 times. With a few small blocks, the computation cost of
Auto Block is somewhat cheaper than All Scalar algorithm. As a
result, the efficiency mean is about 3.5 times that of All
Scalar. Meanwhile, our Nested Adaptation 5K, 10K and 20K algorithms perform
best.  It should be noted that the Nested Adaptation algorithm can achieve
good mixing with adaptation times that are only 15.5\%, 32.5\% and
59\% compared to the adaptation time of Auto Block. In Figure 3,
while the left panel shows a distinction between Auto Block and
other static algorithms, the right panel shows that Nested Adaptation 20K
surpasses Auto Block in just a few outer iterations, indicating
substantial improvements in some models.

\begin{table}
\begin{center}
\begin{tabular}{lrrrrr}
 \hline
 Algorithms& \rule[-1.5mm]{0mm}{0mm} Adapt time
 & Efficiency & Time to $10^4$ effective samples\\ 
  \hline
All Blocked  &  0.00  & 0.0100 & 1000000\\ 
Default  & 0.00    & 0.0020 & 5000000\\ 
All Scalar  & 0.00    & 0.1150 & 86956\\
Auto Block  & 19094.89  & 0.3565 & 47145\\ 
Nested Adaptation 5K  & 2967.55    & 0.4420 &  25592 \\
Nested Adaptation 10K  & 6221.61  & 0.4565 & 28127 \\ 
Nested Adaptation 20K  & 11278.78   & 0.4948  &  31488 \\  
\hline
\end{tabular}
\caption{Summary results of different MCMC algorithms for spatial model. Runtime is presented
as seconds, and efficiency is in units of effective samples produced
per second of algorithm runtime. Time to $N$ effective samples is computed by $N/\mathrm{efficiency}$ for 
static algorithms and that plus adaptation time for Auto Block and Nested Adaptation algorithms.}
\label{table:3}
\end{center}
\end{table}

\begin{figure}
\begin{centering}
\vspace{-0.5cm}
 \includegraphics[width=6.5cm,height=7cm]{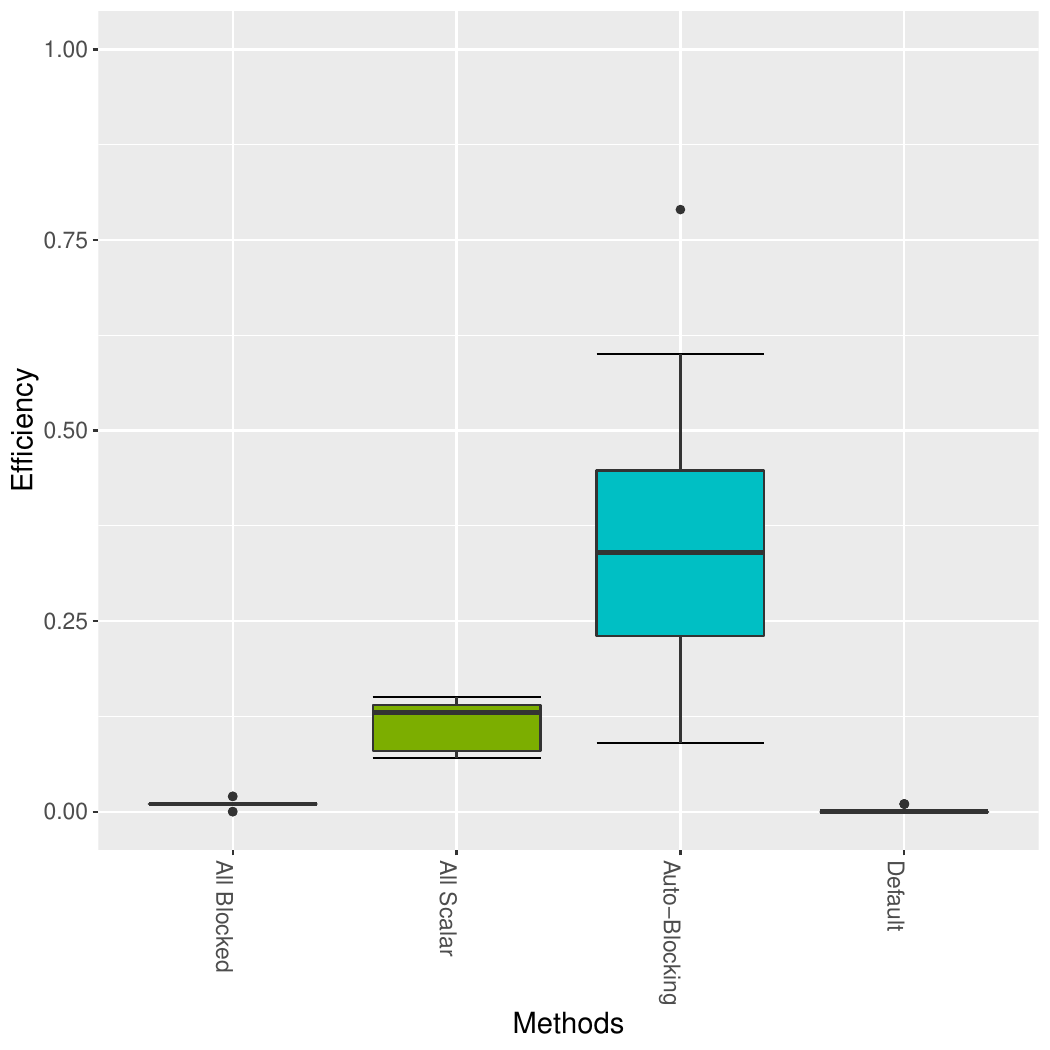}\includegraphics[width=6.5cm,height=7cm]{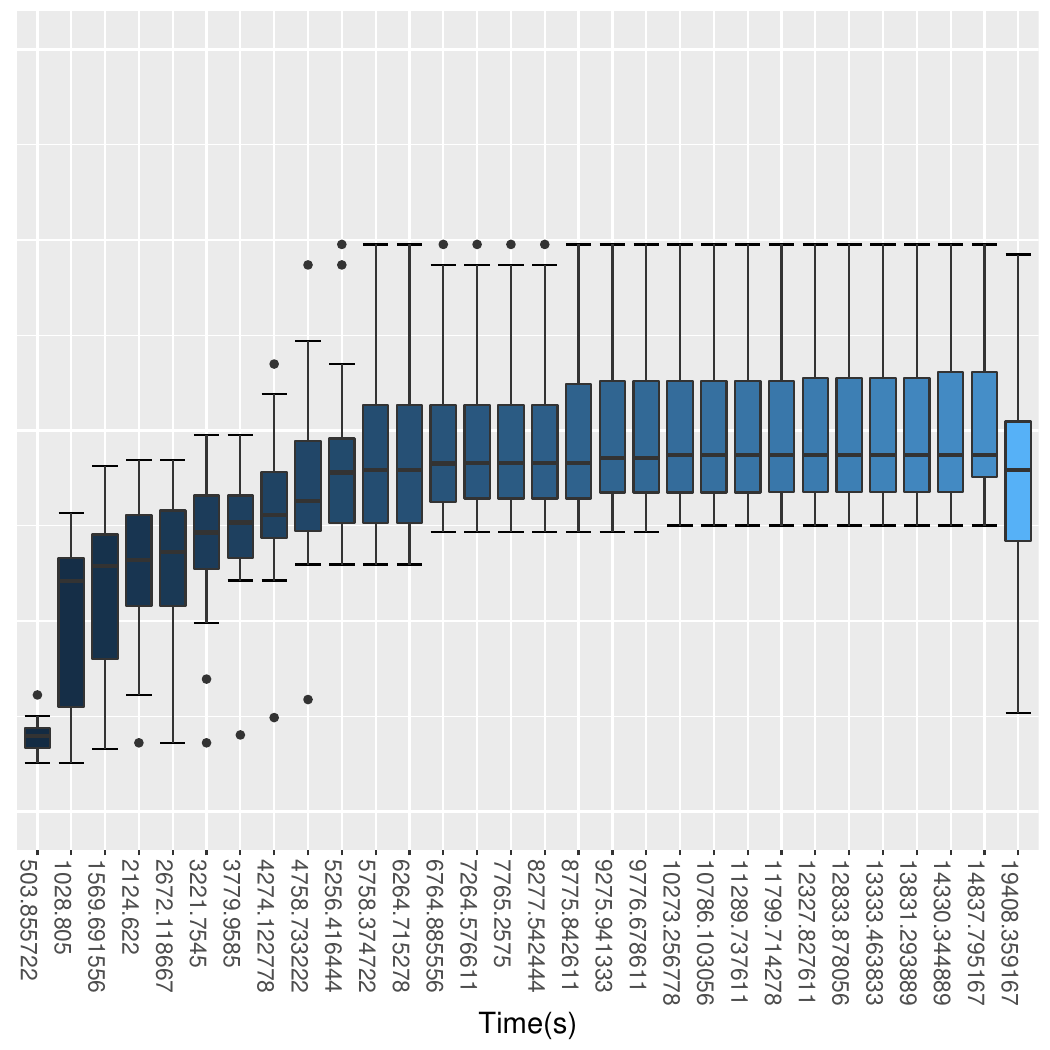}
\par\end{centering}
\caption{MCMC efficiencies of different methods for spatial model. The left panel shows the box-plots of
MCMC efficiencies of All Blocked, All Scalar, Auto Block and Default algorithms computed from 20 replications.
The right panel shows the box-plots of MCMC efficiencies of Nested Adaptation 5K algorithm computed from 20 replications at each outer adaptation. The last (right-most) box-plot is
computed from the chain of 10000 samples generated. The time axis shows the average computational time of 20 replications.}
\label{fig:spatial} 
\end{figure}

\section{Discussion}
\label{discussion} We have proposed a general Nested Adaptation MCMC
algorithm. Our algorithm traverses a space of valid MCMC kernels to
find an efficient algorithm automatically. There is only one previous
approach, namely Auto Block sampling, of this kind that we are
aware of. We have shown that our approach can substantially outperform
Auto Block in some cases, and that both outperform simple static
approaches.  Using some benchmark models, we observe that our
approach can yield improvements, which can be substantial compared to the Default method.

The comparisons presented have deliberately used fairly simple
samplers as options for Nested Adaptation in order to avoid comparisons among
vastly different computational implementations.  A major feature of
our framework is that it can incorporate almost any sampler as a
candidate and almost any strategy for choosing new kernels from
compositions of samplers based on results so far.  Samplers to be
explored in the future could include auxiliary variable algorithms such
as slice sampling or derivative-based sampling algorithms such as
Hamiltonian Monte Carlo \citep{duane1987hybrid}.  Now that the basic
framework is established and shown to be useful in simple cases, it
merits extension to more advanced cases.

The Nested Adaptation method can be viewed as a generalization of the
Auto Block method.  It is more general in the sense that it can use
more kinds of samplers and explore the space of samplers more
generally than the correlation-clustering of Auto Block.  Thus, our framework
can be considered to provide a broad class of automated kernel construction
algorithms that use a wide range of sampling algorithm as components.

If block sampling is included in the space of the candidate samplers,
choosing optimal blocks is important and can greatly increase the
efficiency of the algorithm. For this reason, we extended the cutting
of a hierarchical cluster tree to allow different cut heights on
different branches (different parts of the model). This differs from
Auto Block, which forms all blocks by cutting the entire tree at the
same height. We also have different multivariate adaptive
sampling other than random walk normal distribution
such as automated factor slice sampler and automated factor random walk sampler. With these extensions, the
final efficiency achieved by our algorithm specifically among blocking
schemes is often substantially better and is found in a shorter time.

Beyond hierarchical clustering, there are other approaches one might
consider to find efficient blocking schemes.  One such approach
would be to use the structure of the graph instead of posterior
correlations to form blocks.  This would allow conservation of
calculations that are shared by some parts of the graph, whether or
not they are correlated.  Another future direction could be to improve
how a new kernel is determined from previous results, essentially to
determine an effective strategy for exploring the very
high-dimensional kernel space.  Finally, the trade-off
between computational cost and the accuracy of effective sample
size estimates is worth further exploration.

Development of Nested Adaptation MCMC has also raised several theoretical questions for future work.  First, similarly to other adaptive 
MCMC methods, can more general conditions of validity be proven?  Second, the estimation of effective sample size from a chain generated using 
Nested Adaptation is difficult because the properties of the kernel will have changed during the course of sampling.  Methods for estimating effective sample size typically don't consider such a situation.  Third, what are good choices for the external and internal adaptation schedules (e.g. $\gamma_n$ and $p_k$ in Section 2)?  Fourth, how can we disentangle the contribution of each sampler to mixing achieved by a kernel comprising multiple samplers?  Doing so could enable better moves in kernel space during outer adaptation.  Finally, what are good strategies for parallelization of Nested Adaptation?

\bibliographystyle{ba}
\bibliography{sample}

\begin{acknowledgement}
This work was funded in part by a grant from the U.S. National Science Foundation SI2-SSI program (ACI-1550488).
\end{acknowledgement}
\newpage
\appendix
\section{Block MCMC sampling}
This section provides the validity of block sampling in our general framework. 
We apply the results of \citet{latuszynski2013adaptive}
for block Gibbs samplers. Following closely their notation, let
$(\mathcal{X},\ B(\mathcal{X}))$ be an $m$-dimensional state space
where $\mathcal{X}=\mathcal{X}_{1}\times\cdots\times\mathcal{X}_{d}$,
$\mathrm{\mathcal{X}}_{i}\subseteq\mathbb{R}^{b_{i}}$ so that the
total dimension is $m=b_{1}+\cdots+b_{d}$. We write $X_{n}\in\mathcal{X}$
as $X_{n}=(X_{n,1},\ \ldots,\ X_{n,d})$ and set $\mathcal{X}_{-i}=\mathcal{X}_{1}\times\cdots\times\mathcal{X}_{i-1}\times\mathcal{X}_{i+1}\times\cdots\times\mathcal{X}_{d}$
so that 
\[
X_{n,-i}:=(X_{n,1},\ \ldots,\ X_{n,i-1},\ X_{n,i+1},\ \ldots,\ X_{n,d}).
\]
For $X\sim\pi$, let the conditional distribution of $X_{i}|X_{-i}=x_{-i}\sim\pi(\cdot|x_{-i})$.
We now consider a class of adaptive block MCMC sampler algorithms
as follows.

ALGORITHM of Nested Adaptation Block MCMC.

(1) Set $(\iota_{n},\theta_{n}):=R_{n}(\ X_{n-1},\ \ldots,\ X_{0},\ \iota_{n-1},\theta_{n-1},\ \ldots,\ \iota_{0},\theta_{0})\in\bar{\Theta}$
where $R_{n}$ is an outer adaptation function to select $(\iota_{n},\theta_{n})$
in $\bar{\Theta}$.

(2) Choose block $i\in\{1,\ \ldots,\ d\}$ in that order.

(3) Draw $Y\sim Q_{X_{n-1,-i},\iota_{n-1,i},\theta_{n-1,i}}(X_{n-1,i},\ \cdot)$ where $Q$ is a proposal distribution.

(4) Set $\begin{cases}
X_{n}=\ (X_{n-1,1},\ \ldots,\ X_{n-1,i-1},\ Y,\ X_{n-1,i+1},\ \ldots,\ X_{n-1,d}) & \mathrm{with\mbox{ }probability\,} p\\
X_{n}=X_{n-1} & \mathrm{otherwise}
\end{cases}$

where $p={\displaystyle \min(}1,{\displaystyle \frac{\pi(Y|X_{n-1,-i})q_{X_{n-1,-i},\iota_{n-1,i},\theta_{n-1,i}}(Y,X_{n-1,i})}{\pi(X_{n-1}|X_{n-1,-i})q_{X_{n-1,-i},\iota_{n-1,i},\theta_{n-1,i}}(X_{n-1,i},Y)})}.$
\vspace{3mm}

To establish the results for Nested Adaptation Block MCMC, we need the following
assumption.

\begin{assumption}\label{one} For every $i\in\{1,\ \ldots,\ d\},\ x_{-i}\in\mathcal{X}_{-i}$
and $(\iota_{i},\theta_{i})\in\bar{\Theta}$, the transition kernel
$P_{x_{-i},\iota_{i},\theta_{i}}$ is uniformly ergodic. Moreover,
there exist $s_{i}>0$ and an integer $m_{i}$ s.t. for every $x_{-i}\in\mathcal{X}_{-i}$
and $(\iota_{i},\theta_{i})\in\bar{\Theta}$, there exists a probability
measure $\nu_{x_{-i},\iota_{i},\theta_{i}}$ on $(\mathcal{X}_{i},\ B(\mathcal{X}_{i}))$,
s.t. $P_{x_{-i},\iota_{i},\theta_{i}}^{m_{i}}(x_{i},\ \cdot)\geq s_{i}\nu_{x_{-i},\iota_{i},\theta_{i}}(\cdot)$
for every $x_{i}\in\mathcal{X}_{i}$. \end{assumption}

We have the following results, similar to Theorem 4.19 of \citet{latuszynski2013adaptive}.

\begin{theorem}\label{autoblock} Assume that Assumption~\ref{one}
holds. Then Nested Adaptation Block MCMC is ergodic. That is, $\Vert\pi_{n}(x_{0})-\pi\Vert_{\mathrm{T}\mathrm{V}}\rightarrow0$
as $n\rightarrow\infty$. Moreover, if  
\[
{\displaystyle \sup_{x_{0}}\sup_{x\in\mathcal{X}}\Vert P_{x_{-i},\iota_{i+1},\theta_{i+1}}(x_{i},\ \cdot)-P_{x_{-i},\iota_{i},\theta_{i}}(x_{i},\ \cdot)\Vert_{\mathrm{T}\mathrm{V}}\rightarrow0}
\]
in probability, then convergence of Nested Adaptation Block MCMC is also uniform
over all $x_{0},$ that is, 
\[
{\displaystyle \sup_{x_{0}}\Vert\pi_{n}(x_{0})-\pi\Vert_{\mathrm{T}\mathrm{V}}\rightarrow0}
\]
as $n\rightarrow\infty$. \end{theorem}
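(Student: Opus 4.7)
The plan is to verify the two standard conditions for adaptive MCMC convergence—diminishing adaptation and containment—adapted to the block sampling setting, and then invoke the general machinery of \citet{latuszynski2013adaptive} (specifically their Theorem 4.19) to conclude ergodicity. The block structure of the algorithm is the reason we cannot simply apply the Roberts–Rosenthal result directly: individual block updates only preserve the conditional distribution $\pi(\cdot \mid x_{-i})$ rather than $\pi$ itself, so a finer coupling argument is needed.

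First, I would establish diminishing adaptation. The adaptation in Auto Adapt Block MCMC happens through two channels: the outer adaptation (changing $\iota$) occurs with probability $p_n$, and the internal adaptation of $\theta$ is governed by the step size $\gamma_{c_n}$. Conditioning on whether an outer adaptation occurs at step $n+1$ and using the Lipschitz-in-$\theta$ behavior of the block kernels (analogous to the bound in Proposition 1 of Section~\ref{theory}), one obtains
\[
\mathrm{E}[\triangle_{n+1}] \;\leq\; 2 p_{n+1} + C\,\gamma_{c_{n+1}},
\]
which tends to zero since $p_n \to 0$ and $\gamma_{c_n} \to 0$. Hence $\triangle_{n+1} \to 0$ in probability.

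Second, I would verify containment through simultaneous uniform ergodicity. Assumption~\ref{one} supplies, for each block $i$, a uniform minorization $P_{x_{-i},\iota_i,\theta_i}^{m_i}(x_i, \cdot) \geq s_i \nu_{x_{-i},\iota_i,\theta_i}(\cdot)$ with $s_i > 0$ and $m_i$ independent of $(x_{-i}, \iota_i, \theta_i)$. Standard coupling then yields a geometric bound $\|P_{x_{-i},\iota_i,\theta_i}^{k m_i}(x_i,\cdot) - \pi(\cdot\mid x_{-i})\|_{\mathrm{TV}} \leq (1-s_i)^k$ that is uniform over all $(x_{-i},\iota_i,\theta_i)$. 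Composing these bounds across the $d$ blocks in a full sweep produces SUE for the block sampler, and Proposition~2 (Theorem~1 of \citet{rosenthal2007coupling}) then delivers containment.

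With diminishing adaptation and containment in hand, ergodicity in total variation follows from the adaptive Gibbs convergence result of \citet{latuszynski2013adaptive} (their Theorem 4.19), yielding $\|\pi_n(x_0) - \pi\|_{\mathrm{TV}} \to 0$. For the ``moreover'' statement, the hypothesis strengthens diminishing adaptation to hold uniformly in the initial state $x_0$; combined with the uniform minorization from Assumption~\ref{one} (which gives a common small set with common minorization constant for all parameter values), the coupling can be carried out simultaneously from every $x_0$, yielding $\sup_{x_0} \|\pi_n(x_0) - \pi\|_{\mathrm{TV}} \to 0$. The main technical obstacle is the block structure: because a single block update does not preserve $\pi$, the coupling must track which block is being updated and exploit the full-sweep minorization, which is precisely the content of the coupling construction in \citet{latuszynski2013adaptive}; my argument would cite their coupling while re-verifying that the outer-adaptation mechanism introduced here does not break their inductive estimates.
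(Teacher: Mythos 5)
Your proposal follows essentially the same route as the paper: both verify diminishing adaptation via the Proposition~1 bound $2p_{n+1}+C\gamma_{c_{n+1}}$, and both establish containment through simultaneous uniform ergodicity by combining the per-block minorization of Assumption~\ref{one} with the strong-uniform-ergodicity lemmas of \citet{latuszynski2013adaptive} and the hybrid-sampler composition argument of \citet{roberts1998two} to handle the fact that each block kernel preserves only a conditional distribution. The only cosmetic difference is that you conclude by citing Theorem 4.19 of \citet{latuszynski2013adaptive} where the paper invokes Theorem 1 of \citet{rosenthal2007coupling}, but these rest on the same underlying machinery.
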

\begin{proof}
To prove, we check simultaneous uniform ergodicity and the diminishing
adaptation property of the algorithm and apply the
result from Theorem 1 of \citet{roberts2007coupling} to conclude. We proceed as in the proof of \citet{latuszynski2013adaptive}, Theorem 4.10 
to establish simultaneous uniform ergodicity. By Assumption 1 and Lemma 4.14 of \citet{latuszynski2013adaptive}, 
every adaptive Metropolis transition kernel for the $i\mathrm{th}$ block, $P_{x_{-i},\iota_{n},\theta_{n}}$,
has stationary distribution $\pi(\cdot|x_{-i})$ and is ${\displaystyle \left(\left( \left\lfloor\frac{\log(s_{i}/4)}{\log(1-s_{i})}\right\rfloor+2\right)m_{i},\ \frac{{s}_{i}^{2}} 8\right)}$-strongly
uniformly ergodic. Since each block is sampled at least once by our construction and there is only a finite dimension, the probability of selecting each block is positive and bounded away 
from $0$. In other word, there exists $\epsilon$ such that $p_i>\epsilon$. Observe also that the family of block MCMC $(p),\ p_i>\epsilon$,
is $(m',\ s')$-strongly uniformly ergodic for some $(m',\ s')$ by \citet{latuszynski2013adaptive} Lemma 4.15. Hence, the family of 
adaptive block Metropolis-within-Gibbs samplers indexed by $\iota_{n},\theta_{n} \in\bar\Theta$, is $(m_{*},\ s_{*})$-simultaneously
strongly uniformly ergodic with some $m_{*}$ and $s_{*}$ given as in \citep{roberts1998two}. Therefore we have that simultaneous
uniform ergodicity is satisfied. 
Diminishing adaptation follows from Proposition 1 and from our construction. 
Therefore we can conclude the result.
\end{proof}

\end{document}